\newcommand{\squeezeup}{\vspace{-2.5mm}}
\def\EE{\mathbb E}
\def\C{\mathcal{C}}
\def\N{\mathcal{N}} 
\def\V{\mathcal{V}}
\newtheorem{lemma}{Lemma}
\newtheorem{theorem}{Theorem}
\newtheorem{claim}{Claim}
\title{Improved Capacity Approximations for Gaussian Relay Networks}
\author{\IEEEauthorblockN{Ritesh Kolte and Ayfer \"{O}zg\"{u}r }
\IEEEauthorblockA{Stanford University\\
Stanford, California 94305\\
\{rkolte, aozgur\}@stanford.edu}}
\begin{document}
\maketitle
\begin{abstract}
Consider a Gaussian relay network where a number of sources communicate to a destination with the help of several layers of relays. Recent work has shown that a compress-and-forward based strategy at the relays can achieve the capacity of this network within an additive gap. In this strategy, the relays quantize their observations at the noise level and map it to a random Gaussian codebook. The resultant capacity gap is independent of the SNR's of the channels in the network but linear in the total number of nodes.

In this paper, we show that if the relays quantize their signals at a resolution decreasing with the number of nodes in the network, the additive gap to capacity can be made logarithmic in the number of nodes for a class of layered, time-varying wireless relay networks. This suggests that the rule-of-thumb to quantize the received signals at the noise level used for compress-and-forward in the current literature can be highly suboptimal.
\end{abstract}
\section{Introduction}

Consider a source node communicating to a destination node via a sequence of relays connected by point-to-point channels. See Figure~\ref{subfig:line}. The capacity of this line network is achieved by simple decode-and-forward and is equal to the minimum of the capacities of the successive point-to-point links. The decoding at each stage removes the noise corrupting the information signal and therefore the end-to-end rate achieved is independent of the number of times the message gets retransmitted. 

Unfortunately, the optimality of decode-and-forward is limited to this line topology and in more general networks with multiple relays at each layer, it is well-understood that the rate achieved by decode-and-forward can be arbitrarily away from capacity. Recent work by Avestimehr et al \cite{ADT11} has shown that compress-and-forward can be a better fit for general relay networks. In any relay network with  multi-source multicast traffic, it has been shown  that a compress-and-forward based relaying strategy can achieve the capacity of the network within a gap that is independent of the SNR's of the constituent channels \cite{ADT11,LKEC11,OD10}. However, the gap to capacity increases linearly in the number of nodes in the network. For example, for the line network in Figure~\ref{subfig:line} it would lead to a gap that is linear in the depth of the network $D$. One natural way to explain this gap is the noise accumulation. As the information signal proceeds deeper into the network, it is corrupted by more and more noise. Therefore, any strategy that does not remove the noise corrupting the signal at each stage will naturally suffer a rate loss that increases with the number of stages. However, it is not clear why this rate loss should be \emph{linear} in the depth of the network as the current results in the literature suggest \cite{ADT11,LKEC11,OD10}. The total variance of the accumulated noise over the $D$ stages of the network is $D$ times the variance of the noise at each stage (assuming identical noise variances over the $D$ stages). A factor of $D$ increase in the noise variance in a point-to-point Gaussian channel would lead to a $\log D$ decrease in capacity, and therefore it is natural to ask if we can reduce the linear performance loss of compress-and-forward strategies to logarithmic in $D$.

\begin{figure}[t]
\centering
\subfigure[]{
\includegraphics[scale=1]{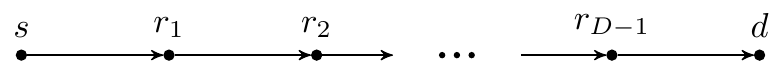}
\label{subfig:line}}
\subfigure[]{
\includegraphics[scale=1]{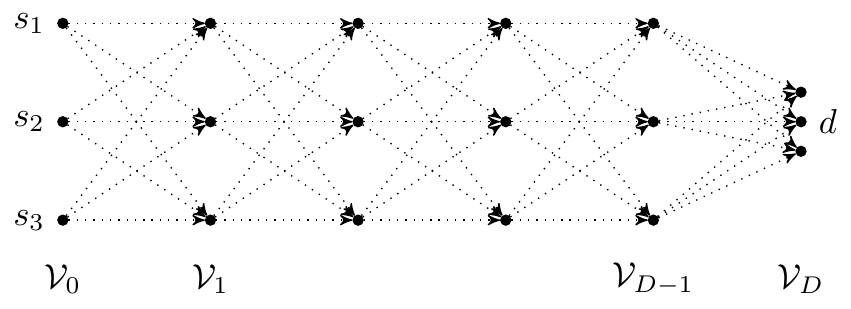}
\label{subfig:layered}}
\squeezeup\caption{(a) Line Network, (b) Multi-Layer Relay Network for $K=3$, each $H_i$ is a Rayleigh fading matrix}\label{fig:model}\squeezeup
\end{figure}
     
This paper is based on the observation that if the relay nodes in Figure~\ref{subfig:line} quantize their observed signals at a resolution decreasing linearly in $D$, the rate loss due to compress and forward is only logarithmic in $D$. (See Section \ref{sec:line}.) This suggests that the rule-of-thumb to quantize the received signals at the noise level used for compress-and-forward in the current literature \cite{ADT11,LKEC11,OD10} can be highly suboptimal. This is because the rate penalty for describing the quantized signals can be significantly larger than the rate penalty associated with coarser quantization. This insight was used in \cite{CO12} to show that compress-and-forward based strategies can achieve the capacity of the $N$-relay Gaussian diamond network within a gap that is logarithmic in $N$.

The main setup we consider in this paper is the multi-layer Gaussian relay network in Figure~\ref{subfig:layered}. Here $K$ source nodes communicate to a destination node equipped with multiple antennas over $D$ layers, each layer containing $K$ single-antenna relays. Each relay observes a noisy linear combination of the signals transmitted by the relays in the previous layer. All channels are subject to i.i.d. Rayleigh fast-fading. Current results on compress-and-forward \cite{ADT11,LKEC11,OD10} yield a sum-rate which is within $1.3\,KD$ gap to the capacity of this network, where $KD$ is the total number of nodes. Instead, we show that if relays quantize their received signals at a resolution that decreases as the number of nodes increases, compress-and-forward can achieve a sum-rate which is within an additive gap of $K\log D+K$ of the network sum-capacity. So for a fixed $K$, as the number of layers $D$ increases, this gap only grows logarithmically in the depth of the network $D$ (and therefore logarithmically in the number of nodes $KD$). As a side result, we provide an analysis of the compress-and-forward based strategies in \cite{ADT11,LKEC11,OD10} in fast fading wireless networks.

This same setup has been considered in \cite{NNW11}, where a computation alignment strategy is proposed to remove the accumulating noise  with the depth of the network. This yields a gap $7K^3 +5K\log K$. The computation alignment strategy is based on the idea of combining compute-forward \cite{NG11} with ergodic alignment proposed in \cite{NGJV09}. While the gap to capacity obtained by computation alignment is independent of $D$, this strategy is significantly more complex than compress-forward and has a number of disadvantages from a practical perspective. In particular, ergodic alignment over the fading process leads to large delays in communication and requires each relay  to know the instantaneous realizations of all the channels in the network. Moreover, its performance critically depends on the symmetry of the fading statistics. The compress-forward strategy with improved quantization we propose in this paper has minimal requirements. In particular, no channel state information is required at the source and at the relays, and the fading statistics are not critical to the operation of the strategy.

\section{Model and Preliminaries}\label{sec:model}
\subsection{Model}
We consider the configuration shown in Figure~\ref{subfig:layered}. The network is a directed layered network, each layer except the last containing $K$ nodes. The nodes in the $i$th layer are collectively referred to as $\V_i$ where $0\leq i\leq D$. Nodes in $\V_0$ are the $K$ source nodes $\{s_j\}_{j=1}^K$, having messages at rate $R_j$ to be communicated to the single destination node $d$ in $\V_D$, which has $K$ antennas. Since $\V_D$ only contains $d$, we use $d$ and $\V_D$ interchangeably in the sequel. We assume that $d$ is equipped with multiple antennas in order to keep the problem interesting. Otherwise, the minimum cut becomes the multiple-input-single-output cut from the last layer of relays to $d$ and this trivializes the problem of approximately achieving the capacity of the network. Instead of multiple antennas at $d$, one can also assume orthogonal bit-pipes from nodes in $\V_{D-1}$ to $d$, as done in \cite{NNW11}. Let $\V^{i}$ denote $\V_0\cup\V_1\cup\dots\cup \V_i$ and $\N$ denote the set of all nodes, i.e. $\N=\V^D$. 

For $0\leq i\leq D-1$, the received signal at nodes in $\V_{i+1}$ (or antennas if $i=D-1$) depends only on the transmit signals of nodes in $\V_i$ and at time $t$ is given by
$$Y_{\V_{i+1}}[t]=H_{\V_i\rightarrow\V_{i+1}}[t]X_{\V_i}[t]+Z_{\V_{i+1}}[t],$$
where $Y_{\V_{i+1}}$ and $X_{\V_i}$ are vectors containing the received and transmitted signals at nodes in $\V_{i+1}$ and $\V_i$ respectively; and $Z_{\V_{i+1}}\sim\C\N(0,\sigma^2I)$, i.e. we assume flat-fading channels between the nodes with i.i.d. circularly symmetric complex Gaussian noise. The $(k,l)$'th entry of the matrix $H_{\V_i\rightarrow\V_{i+1}}[t]$ denotes the channel coefficient from $l$'th relay  in $\V_i$ to $k$'th relay in $\V_{i+1}$ at time $t$. We further assume that channels are i.i.d. Rayleigh fading, i.e  each entry in the matrices $\{H_{\V_0\rightarrow\V_1}[t],H_{\V_1\rightarrow\V_2}[t],\dots,H_{\V_{D-1}\rightarrow d}[t]\}$ is i.i.d. $\C\N(0,1)$ across time, and independent of other entries and independent of the noise and transmissions. (The conclusions of the paper also hold under a block fading model.) All transmitting nodes are subject to a long-term average power constraint $P$. We can assume that $Y_{\V_0}=0$ and $X_{d}=0$. The source nodes and the relay nodes do not know the instantaneous realizations of the channel coefficients, i.e have no transmit or receive channel state information. (The source nodes know the topology of the network and the channel statistics, i.e. the end-to-end ergodic rate supported by the network.) All channel realizations are known at the destination node and are used while decoding the transmitted messages from the source nodes. The largest achievable sum-rate $\sum_{j=1}^{K}R_j$ in the network is called the sum-capacity of the network, denoted by $C_{sum}$.

\subsection{Preliminaries}
A cut $\Omega$ is a subset of $\N$. %
Let $H[t]$ be a random vector containing all the channel realizations in the network. Since the channel realizations are known at the destination $d$, we can view $H[t]$ as part of the output of $d$ at time $t$, i.e., at time $t$, $d$ observes $Y_d[t]$ and $H[t]$. Note that this does not alter the memorylessness property of the network. For the sake of notational convenience in the proofs, we define the following quantity for a cut $\Omega$,
\begin{IEEEeqnarray}{rCl}
\overline{C}(\Omega) & := & I(X_{\Omega};Y_{\Omega^c},H|X_{\Omega^c})\nonumber\\
& = & I(X_{\Omega};Y_{\Omega^c}|X_{\Omega^c},H) \label{eq:cutset_expr}\end{IEEEeqnarray} where $X_{\N}$ are jointly distributed with some distribution such that the average power constraints are satisfied. The second equality follows from the fact that $I(X_{\Omega};H|X_{\Omega^c})=0$ since the distribution of $X_{\mathcal{N}\setminus d}$ is independent of $H$ ($H$ is unknown to all nodes but the destination) and $X_{\V_d}=0$. With this notation, the information-theoretic cutset upper bound \cite[Theorem 15.10.1]{CT06} on the achievable rates in the network can be expressed as follows: If each source $s_j$ can reliably communicate at a rate $R_j$ simultaneously, then there exists some joint distribution $p(X_{\N})$ on $X_{\N}$ such that 
\begin{equation}\label{eq:cutset}\sum_{j:s_j\in\Omega, d\in\Omega^c} R_j\leq \overline{C}(\Omega)\quad\quad \text{for all cuts }  \Omega.\end{equation}

\section{Line Network}\label{sec:line}
We first illustrate the main idea of this paper in a simple setting, the line network in Figure~\ref{subfig:line}. Here we assume that each link $i$ is a AWGN channel with gain $h_i$ and the channel gains $h_i$ are fixed and known. Each node has power $P$ and the noise variance is $\sigma^2$. (The conclusions below also hold under a fast fading assumption similar to the one described in Section~\ref{sec:model}.)  As mentioned before, a decode-forward strategy at the relays achieves the capacity of this line network, while compress-and-forward based strategies (such as quantize-map-forward in \cite{ADT11} and noisy network coding in  \cite{LKEC11}) with quantization done at the noise level have a gap to capacity that is linear in the number of nodes $D$. Here, we show that if relays instead quantize at $(D-1)$ times the noise level, the gap to capacity becomes logarithmic in $D$. 

Number the nodes $s$ through $d$ as $0,1,2,\dots,D$. Let's consider the rate achievable by noisy network coding  for this network, assuming all relay nodes choose their transmission codebooks independently from a Gaussian distribution, i.e. $X_i\sim\C\N(0,P)$ and independent of each other. Theorem 1 in \cite{LKEC11} says that the following rate is achievable
$$R=\min_{0\leq i\leq D-1} \left(I(X_i;\hat{Y}_{i+1}|X_{i+1})-I(Y_{\V^i};\hat{Y}_{\V^i}|X_{\N},\hat{Y}_{\N\setminus\V^i})\right)$$
where we are assuming that the destination node also performs quantization for simplicity. %

Now, let each relay choose $\hat{Y}_i=Y_i+\hat{Z}_i$ where $\hat{Z}_i\sim\N(0,(D-1)\sigma^2)$ independent of everything else. Since $Y_{i+1}=h_iX_{i}+Z_{i+1}$, the channel from $X_i$ to $\hat{Y}_{i+1}$ is effectively an AWGN channel of noise power $D\sigma^2$ with gain $h_i$. Then the first term in the achievable rate expression becomes $\log\left(1+\frac{|h_i|^2P}{D\sigma^2}\right)$ which is greater than or equal to $\log\left(1+\frac{|h_i|^2P}{\sigma^2}\right)-\log(D)$.

Due to the coarse quantization, the second term in the achievable rate expression is reduced significantly as compared to quantizing at the noise level.  We have
\begin{align*}
I(Y_{\V^i};\hat{Y}_{\V^i}|X_{\N},\hat{Y}_{\N\setminus\V^i})&=I(Z_{\V^i};\hat{Z}_{\V^i})\\
&=(|\V^i|-1)\log\left(1+\frac{\sigma^2}{(D-1)\sigma^2}\right)\\
&=i\log\left(1+\frac{\sigma^2}{(D-1)\sigma^2}\right) \\
&\leq \frac{i}{D-1} \leq 1,
\end{align*}
since $i\leq D-1.$ Since the capacity of the line network is given by the minimum of the capacities of each link: $\min_i \log(1+|h_i|^2P)$, we see that decreasing the resolution of quantization as the number of nodes increases results in a gap of $\log(D)+1.$ If the quantization were done at the noise level, the first term in the noisy network coding achievable rate would  suffer from only a $\log(2)$ decrease instead of $\log(D)$ with respect to capacity, however the second term would be linear in $D$, overall resulting in a linear gap in $D$ to capacity.

\section{Layered Network with Multiple Relays}\label{sec:layered}
The main result of this paper is the following theorem.

\begin{theorem}\label{thm:main}
The sum-capacity of the network in Figure~\ref{subfig:layered} is bounded by 
\begin{equation}\label{eq:mainres}
C(K,K)-K\log (D)-K \leq C_{sum} \leq C(K,K)
\end{equation}
where the lower bound is achievable by a compress-and-forward strategy with appropriately chosen quantization levels. $C(K,K)$ denotes the ergodic capacity of a $K$-by-$K$ MIMO Rayleigh fast-fading channel with per-antenna average power constraint of $P$ and noise variance $\sigma^2$ and is equal to the information-theoretic cutset upper bound on the sum-capacity of the network. 
\end{theorem}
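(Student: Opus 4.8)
The plan is to prove the two inequalities separately. The upper bound $C_{sum}\le C(K,K)$ needs only the single cut $\Omega=\N\setminus\{d\}$ in \eqref{eq:cutset}: for this cut $\overline{C}(\Omega)=I(X_{\V_{D-1}};Y_d\mid H)$ is the mutual information of the $K$-by-$K$ channel $Y_d=H_{\V_{D-1}\to d}X_{\V_{D-1}}+Z_d$ under per-antenna power $P$, and since $H_{\V_{D-1}\to d}$ is isotropic and unknown to the relays, a Gaussian-maximizes-entropy argument together with the optimality of equal power allocation with no transmit CSI on an isotropic channel gives $\overline{C}(\Omega)\le \EE\log\det(I+\tfrac{P}{\sigma^2}HH^\dagger)=C(K,K)$, hence $\sum_j R_j\le C(K,K)$ for every admissible rate tuple. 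To get in addition that $C(K,K)$ \emph{equals} the cutset value, I would note that for i.i.d.\ $\C\N(0,P)$ inputs every layer cut $\Omega=\V^i$ satisfies $\overline{C}(\V^i)=I(X_{\V_i};Y_{\V_{i+1}}\mid H)=C(K,K)$ (conditioning on $X_{(\V^i)^c}$ decouples all deeper layers), that $\overline{C}(\,\cdot\,)$ is submodular for a fixed product input distribution, and that in a layered network (layer $i{+}1$ hears only layer $i$) an uncrossing argument then shows the minimizing cut may be taken to be a layer cut; combined with $\overline C_p(\N\setminus\{d\})\le C(K,K)$ for every input $p$ this pins the cutset value to $C(K,K)$.

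For the lower bound I would run the noisy-network-coding scheme of \cite{LKEC11}: every node transmits an independent $\C\N(0,P)$ codeword, and every relay in $\V_1,\dots,\V_{D-1}$ (and, harmlessly, $d$) sets $\hat Y_k=Y_k+\hat Z_k$ with $\hat Z_k\sim\C\N(0,(D-1)\sigma^2)$ independent of everything else, i.e.\ quantizing at $(D{-}1)$ times the noise level, exactly as in Section~\ref{sec:line}. Theorem~1 of \cite{LKEC11} then makes achievable the sum-rate $\min_\Omega\big[I(X_\Omega;\hat Y_{\Omega^c}\mid X_{\Omega^c},H)-I(Y_\Omega;\hat Y_\Omega\mid X_\N,\hat Y_{\Omega^c},H)\big]$, over cuts $\Omega$ with $\V_0\subseteq\Omega$ and $d\in\Omega^c$. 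In the second (quantization-penalty) term, conditioning on $(X_\N,H)$ makes the quantizations independent across relays and reduces it to $\sum_{k\in\Omega,\ k\text{ a relay}}I(Z_k;Z_k+\hat Z_k)$; each summand is $\log(1+\tfrac{1}{D-1})\le\tfrac{1}{D-1}$ and there are at most $(D-1)K$ relays, so this term is at most $K$ for every cut. For the first term at a layer cut $\Omega=\V^i$, conditioning on $X_{\Omega^c}$ decouples all layers past $i{+}1$ and leaves $I(X_{\V_i};\hat Y_{\V_{i+1}}\mid H)=\EE\log\det\!\big(I+\tfrac{P}{D\sigma^2}HH^\dagger\big)$, the $K$-by-$K$ Rayleigh MIMO mutual information with noise inflated from $\sigma^2$ to $D\sigma^2$; writing both this and $C(K,K)$ in terms of the eigenvalues $\lambda_1,\dots,\lambda_K$ of $\tfrac{P}{\sigma^2}HH^\dagger$ and using $\log(1+\lambda)-\log(1+\lambda/D)\le\log D$ termwise gives $I(X_{\V_i};\hat Y_{\V_{i+1}}\mid H)\ge C(K,K)-K\log D$. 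Thus every layer cut has value at least $C(K,K)-K\log D-K$.

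The remaining point, and the main obstacle, is to pass from layer cuts to arbitrary cuts: I must show the minimum of the noisy-network-coding cut value over all admissible $\Omega$ is attained at a layer cut. The intended argument is that $\Omega\mapsto I(X_\Omega;\hat Y_{\Omega^c}\mid X_{\Omega^c},H)$ is a submodular set function under the fixed product input distribution (the $\hat Y_k$ are conditionally independent given all channel inputs and $H$, so the subtracted conditional-entropy term is modular), and then to invoke the layered structure to uncross a minimizer into a layer cut, as in the layered-network cutset analysis of \cite{ADT11}; the delicate bit is checking this for the $\hat Y$-version of the cut function and ensuring the separable penalty term does not interfere. As a cross-check (and an alternative when the cut crosses the layering only once) I would split the first term exactly: if layers $0,\dots,i^\ast{-}1$ are inside $\Omega$, a subset $A\subseteq\V_{i^\ast}$ of size $m$ is outside, and everything past layer $i^\ast$ is outside, the first term equals an $m$-receive MIMO term (layer $i^\ast{-}1$ into $A$) plus a $(K{-}m)$-transmit MIMO term (the rest of layer $i^\ast$ into layer $i^\ast{+}1$), and these recombine via $\det(I+M_1+M_2)\le\det(I+M_1)\det(I+M_2)$ for $M_1,M_2\succeq 0$ to give back at least $\EE\log\det(I+\tfrac{P}{D\sigma^2}HH^\dagger)$; the genuinely general cut is then handled by iterating this or by the submodularity reduction. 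Once the layer-cut reduction is in hand, $C_{sum}\ge C(K,K)-K\log D-K$ follows. The MIMO determinant estimates and the penalty bound are routine; the submodularity/uncrossing step is where the real work lies.
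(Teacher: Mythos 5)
The central gap is the step you yourself flag: passing from layer cuts to arbitrary cuts in the lower bound. Your primary route — submodularity of $\Omega\mapsto I(X_\Omega;\hat Y_{\Omega^c}\mid X_{\Omega^c},H)$ under product inputs plus an uncrossing argument "as in the layered-network analysis of \cite{ADT11}" — does not establish what you need: layering plus submodularity do \emph{not} imply that the minimizing cut can be taken to be a layer cut. Already for $D=2$ (a diamond-type layered network) the minimizing cut is generically a mixed cut containing the source and a strict subset of the relays, and submodular uncrossing with a layer cut $\V^i$ only relates the values of $\Omega\cap\V^i$ and $\Omega\cup\V^i$ to those of $\Omega$ and $\V^i$; it cannot force a layer cut to be optimal. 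What makes the reduction true here is the statistical symmetry of the network — every layer has $K$ nodes and all fades are i.i.d. Rayleigh — and that is precisely where the paper's proof does its work (Claim~\ref{claim1} inside Lemma~\ref{lem:iid_mincut}): a general cut with $M_i$ nodes of $\V_i$ on the source side decomposes, by the block-diagonal structure of $H_{\Omega\rightarrow\Omega^c}$, into $C(M_1^c,K)+C(M_2^c,M_1)+\cdots+C(K,M_{D-1})$, which is then driven down to $C(K,K)$ by iterating $C(x,y)=C(y,x)$, monotonicity in each argument, and the Hadamard-type superadditivity $C(x,y)+C(K-x,y)\ge C(K,y)$. Your "cross-check" inequality $\det(I+M_1+M_2)\le\det(I+M_1)\det(I+M_2)$ is exactly this superadditivity, but you apply it only to cuts that deviate from a layer cut in a single layer and defer the general cut to "iterating this or the submodularity reduction"; the iteration over all $D$ blocks is the actual proof and is missing, while the submodularity route is unsound as stated.

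A second, smaller gap concerns the multi-source statement. You apply noisy network coding with only the cuts satisfying $\V_0\subseteq\Omega$, but with $K$ separate sources the NNC region also carries constraints from cuts separating strict subsets of the sources, and the maximum sum-rate of that region is not automatically the minimum over full cuts (a small value on a cut isolating one source can pull the sum-rate below it). The paper avoids evaluating those cuts by a different argument: fix the relay/quantization operation, view the result as a MAC $p(y_d,H\mid x_{\V_0})$ from the $K$ sources to $d$, and use that the MAC sum-capacity with independent $\C\N(0,P)$ inputs equals $I(X_{\V_0};Y_d,H)$, i.e., exactly the end-to-end mutual information of the single $K$-antenna-source analysis; this transfers the single-source lower bound to the sum-capacity. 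Your remaining steps are fine and match the paper: the quantization-penalty bound of $K$ (at most $K(D-1)$ quantizers, each costing $\log(1+\tfrac{1}{D-1})$), the per-layer-cut estimate $\EE\log\det(I+\tfrac{P}{D\sigma^2}HH^{\dagger})\ge C(K,K)-K\log D$, and the upper bound via a single $K\times K$ MIMO cut with no transmit CSI (you use the last cut, the paper the first; both give $C(K,K)$).
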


We first prove Theorem~\ref{thm:main} for the case when the $K$ source nodes $\{s_1,\dots,s_K\}$ are co-located, i.e. $\{s_1,\dots,s_K\}$ behave like a single source denoted by $s$ with $K$ antennas, with a per-antenna power constraint $P$, transmitting a message at rate $R$ to the destination $d$, see Figure~\ref{fig:cut}. In this case, we show that the point-to-point capacity satisfies the conditions in Theorem~\ref{thm:main}. At the end of this section, we extend the proof for the capacity in the single-source case to the sum-capacity in the original setup containing multiple sources.

We prove Theorem~\ref{thm:main} for the single source setup in two steps. We first establish the upper bound on the capacity in Section~\ref{subsec:upper} and then show that it is achievable within a gap $K\log (D)+K$ in Section~\ref{subsec:nnc}.

\subsection{Upper bound}\label{subsec:upper} 

The upper bound in Theorem \ref{thm:main} is easy to prove. Consider the cutset upper bound in \eqref{eq:cutset} for the single source case:
$$R\leq \min_{\Omega : s\in\Omega,d\in\Omega^c} I(X_{\Omega};Y_{\Omega^c}|X_{\Omega^c},H).$$
Considering only the cut $\Lambda=\V_0$ implies that
\begin{IEEEeqnarray}{rCl}
R & \leq & \max_{p(X_{\N})}\min_{\Omega :s\in\Omega,d\in\Omega^c} \overline{C}(\Omega)\nonumber\\%&\leq & \begin{array}{ccc} \text{min of all bounds on}\\ \text{sum-rate obtained from \eqref{eq:cutset}}\end{array}\nonumber\\
& \leq & \max_{p(X_{\N})}\overline{C}(\V_0)\label{eq:tight_ineq}\\
&=&\max_{p(X_{\N})}I(X_{\V_0};Y_{\N\setminus\V_0}|X_{\N\setminus\V_0},H)\nonumber\\
&\stackrel{(a)}{=}&\EE\left[\log\det\left(I+\frac{1}{\sigma^2}PH_{\V_0\rightarrow\V_1}H_{\V_0\rightarrow\V_1}^{\dagger}\right)\right]\nonumber\\
&\triangleq &C(K,K)\nonumber,
\end{IEEEeqnarray}
where (a) follows from the fact that the maximal mutual information in the earlier line corresponds to the ergodic capacity of a $K\times K$  MIMO Rayleigh fast-fading  channel with per-antenna average power constraint $P$ and the maximizing input distribution  for this channel is well known to be i.i.d. $\C\N(0,P)$ \cite{T99}. We denote this capacity by $C(K,K)$.

\paragraph*{Remark} The cutset upper bound in \eqref{eq:cutset} bounds the rate with many additional constraints arising from cuts other than $\V_0$. In the above derivation, by concentrating on a single cut $\Lambda=\V_0$ we have derived an upper bound \eqref{eq:tight_ineq} on the cutset bound. Although such an upper bound can be weaker in general, in the current case it can be shown that $C(K,K)$ is indeed the tightest constraint on the rate imposed by the cutset bound. This can be observed from the discussion in the next section (Claim~\ref{claim1}), which implicitly shows that the cutset bound on the rate evaluated under i.i.d distributions is equal to $C(K,K)$. Since the cutset bound evaluated under a particular distribution forms a lower bound on the actual bound obtained from \eqref{eq:cutset}, this shows that the tightest constraint imposed on the rate by the cutset bound is exactly equal to $C(K,K)$.

\subsection{Achievability}\label{subsec:nnc}
We now prove the lower bound in Theorem~\ref{thm:main}. We start with the rate achieved by noisy network coding in \cite[Theorem~1]{LKEC11}, which states that all rates $R$ that satisfy  %
\begin{IEEEeqnarray}{lCl}
R & \leq & \min_{\Omega :s\in\Omega ,d\in\Omega^c} \left[ I(X_{\Omega};\hat{Y}_{\Omega^c},H|X_{\Omega^c})\right.\nonumber\\
&& \quad\quad\quad\quad\quad\quad\quad \left. -I(Y_{\Omega};\hat{Y}_{\Omega}|X_{\N},\hat{Y}_{\Omega^c},H)\right]\nonumber\\
& = & \min_{\Omega :s\in\Omega ,d\in\Omega^c} \left[ I(X_{\Omega};\hat{Y}_{\Omega^c}|X_{\Omega^c}, H)\right.\nonumber\\
&& \quad\quad\quad\quad\quad\quad\quad \left. -I(Y_{\Omega};\hat{Y}_{\Omega}|X_{\N},\hat{Y}_{\Omega^c},H)\right].\nonumber
\end{IEEEeqnarray}
for some joint distribution of the form $\Pi_{k\in \N} p(x_k)p(\hat{y}_k| y_k, x_k)$ are achievable. The equality again follows from the fact that $I(X_{\Omega};H|X_{\Omega^c})=0$. Hence, the following $R$ is achievable:
\begin{IEEEeqnarray}{lCl}
R & \leq & \min_{\Omega :s\in\Omega ,d\in\Omega^c} I(X_{\Omega};\hat{Y}_{\Omega^c},H|X_{\Omega^c})\nonumber\\
&& \quad\quad\quad\quad  -\max_{\Omega :s\in\Omega ,d\in\Omega^c}I(Y_{\Omega};\hat{Y}_{\Omega}|X_{\N},\hat{Y}_{\Omega^c},H).\label{eq:nnc_ach}
\end{IEEEeqnarray}
We choose the input distribution $X_k$ at each node $k$ to be $\C\N(0,P)$ (and similarly the input distributions corresponding to the antennas of the source node are i.i.d. $\C\N(0,P)$).  We choose $\hat{Y}_k$ such that $\hat{Y}_k=Y_k + \hat{Z}_k$ where $\hat{Z}_k$ is $\C\N(0,(D-1)\sigma^2)$ independent of everything else. Note the difference with the quantization in \cite{ADT11,LKEC11,OD10}: the quantization noise has variance $(D-1)\sigma^2$ as opposed to $\sigma^2$, the noise variance. For simplicity, we also assume that the destination quantizes its observation according to $\hat{Y}_{\V_D}=Y_{\V_D}+\hat{Z}_{\V_D},$  where $\hat{Z}_{\V_D}\sim\C\N(0,(D-1)\sigma^2)$ independent of everything else, and treats $\hat{Y}_{\V_D}$ (denoted by $\hat{Y}_d$ for brevity) as its observation, along with all the channel realizations $H$.

We will evaluate the right-hand side of  \eqref{eq:nnc_ach} in two steps. In Lemma~\ref{lem:nnc}, we upper bound the second term by $K$. In Lemma~\ref{lem:iid_mincut}, we lower bound the first term by $C(K,K)-\log D$. Combining the two results gives the lower bound in Theorem~\ref{thm:main} (for the single source case).

\begin{lemma}\label{lem:nnc}

$$\max_{\Omega :s\in\Omega ,d\in\Omega^c}I(Y_{\Omega};\hat{Y}_{\Omega}|X_{\N},\hat{Y}_{\Omega^c},H)\leq K$$
\end{lemma}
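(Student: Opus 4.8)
The plan is to fix an arbitrary cut $\Omega$ with $s\in\Omega$, $d\in\Omega^c$ and reduce the conditional mutual information to a pure additive‑Gaussian‑noise quantity by conditioning on the channel realizations $H$ together with all transmit signals $X_\N$. Under the chosen joint distribution $\prod_{k\in\N}p(x_k)p(\hat y_k\mid y_k,x_k)$ the transmit signals $X_\N$ are independent of all receiver noises $\{Z_k\}$ and quantization noises $\{\hat Z_k\}$, and the whole noise collection is mutually independent (and independent of $H$). Hence, given $(X_\N,H)$, each received signal $Y_k$ equals a deterministic quantity $c_k$ — a fixed function of the realizations of $X_\N$ and $H$ — plus $Z_k\sim\C\N(0,\sigma^2)$, with $c_k=0$ (so $Y_k=0$) for $k\in\V_0$, while $\hat Y_k=Y_k+\hat Z_k$ with $\hat Z_k\sim\C\N(0,(D-1)\sigma^2)$. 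Subtracting the deterministic shifts does not change mutual information, and given $(X_\N,H)$ the vector $\hat Y_{\Omega^c}$ is a bijective function of $Z_{\Omega^c}+\hat Z_{\Omega^c}$, so
$$I(Y_\Omega;\hat Y_\Omega\mid X_\N,\hat Y_{\Omega^c},H)=I\big(Z_\Omega;\ Z_\Omega+\hat Z_\Omega\ \big|\ X_\N,\ H,\ Z_{\Omega^c}+\hat Z_{\Omega^c}\big).$$

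Next I would discard the conditioning. The triple $(X_\N,\,H,\,Z_{\Omega^c}+\hat Z_{\Omega^c})$ is independent of the pair $(Z_\Omega,\,Z_\Omega+\hat Z_\Omega)$, so the right‑hand side equals $I(Z_\Omega;Z_\Omega+\hat Z_\Omega)$; and by the mutual independence of the per‑node noise pairs $(Z_k,\hat Z_k)$ this tensorizes into $\sum_{k\in\Omega}I(Z_k;Z_k+\hat Z_k)$. The source terms vanish ($Z_k=0$ for $k\in\V_0$), and for a relay node a one‑line Gaussian‑entropy computation gives $I(Z_k;Z_k+\hat Z_k)=\log\!\big(1+\tfrac{\sigma^2}{(D-1)\sigma^2}\big)=\log\!\big(1+\tfrac1{D-1}\big)\le\tfrac1{D-1}$, exactly as in the line‑network calculation of Section~\ref{sec:line}.

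Finally I would count the contributing nodes: the only nodes of $\Omega$ with a nonzero received signal lie in the relay layers $\V_1,\dots,\V_{D-1}$ — the source layer contributes $0$ and $d$ is excluded since $d\in\Omega^c$ — so there are at most $(D-1)K$ of them, whence
$$I(Y_\Omega;\hat Y_\Omega\mid X_\N,\hat Y_{\Omega^c},H)\ \le\ (D-1)K\cdot\tfrac1{D-1}=K$$
uniformly over $\Omega$, which is the claim. I do not expect a real obstacle here: the one substantive step is the first reduction — observing that conditioning on $(X_\N,H)$ linearizes the problem into mutually independent Gaussian noises, so that the cross‑cut conditioning on $\hat Y_{\Omega^c}$ becomes vacuous. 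The only point requiring care is the node count, where one must use that the source nodes receive nothing and that $d$ lies on the far side of the cut; this is exactly what makes the coefficient $(D-1)K$ rather than $DK$ and yields the clean constant $K$.
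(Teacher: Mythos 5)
Your proposal is correct and follows essentially the same route as the paper: condition on $(X_{\N},H)$ to reduce everything to the independent Gaussian noises $Z_k,\hat Z_k$, compute $\log\bigl(1+\tfrac{1}{D-1}\bigr)$ per relay node, and count at most $K(D-1)$ such nodes in $\Omega$ (the paper's $|\Omega|-1$). The only cosmetic difference is that you observe the conditioning on $\hat Y_{\Omega^c}$ is exactly vacuous after this reduction, whereas the paper simply drops it via the entropy inequality $h(\hat Y_{\Omega}|X_{\N},\hat Y_{\Omega^c},H)\leq h(\hat Y_{\Omega}|X_{\N},H)$; both yield the same bound of $K$.
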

\begin{proof}
Given our choice for the distributions of the random variables involved, we have
\begin{IEEEeqnarray*}{lCl}
I(Y_{\Omega};\hat{Y}_{\Omega}|X_{\N},\hat{Y}_{\Omega^c},H)\\  
\quad = h(\hat{Y}_{\Omega}|X_{\N},\hat{Y}_{\Omega^c},H)-h(\hat{Y}_{\Omega}|Y_{\Omega},X_{\N},\hat{Y}_{\Omega^c},H)\\
\quad \leq h(\hat{Y}_{\Omega}|X_{\N},H)-h(\hat{Y}_{\Omega}|Y_{\Omega},X_{\N},H)\\
\quad = (|\Omega |-1)\log(D\sigma^2)-(|\Omega |-1)\log ((D-1)\sigma^2) \\
\quad \leq K(D-1)\log\left(1+\frac{1}{D-1}\right)\leq K.
\end{IEEEeqnarray*}
Hence $\max_{\Omega :s\in\Omega ,d\in\Omega^c}I(Y_{\Omega};\hat{Y}_{\Omega}|X_{\N},\hat{Y}_{\Omega^c},H)\leq K.$
\end{proof}

We next lower bound the first term in \eqref{eq:nnc_ach}. 

\begin{lemma}\label{lem:iid_mincut}
$$\min_{\Omega :s\in\Omega, d\in\Omega^c} I(X_{\Omega};\hat{Y}_{\Omega^c}|X_{\Omega^c},H) \geq C(K,K)-K\log D $$
\end{lemma}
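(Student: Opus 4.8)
My plan is to evaluate, for an arbitrary cut $\Omega$ with $s\in\Omega$, $d\in\Omega^c$, the quantity $I(X_\Omega;\hat Y_{\Omega^c}\mid X_{\Omega^c},H)$ and show it is at least $C(K,K)-K\log D$ regardless of which cut we pick; then the min over cuts inherits the same bound. Because the network is layered and the chosen input distribution is i.i.d.\ $\C\N(0,P)$ across all nodes (and independent of $H$), I expect the cut value to decompose layer by layer. Concretely, writing $\Omega_i := \Omega\cap\V_i$ and $\Omega_i^c := \V_i\setminus\Omega_i$, the only randomness crossing the cut between consecutive layers is the signal sent from $\Omega_i$ to $\Omega_{i+1}^c$ through the fading submatrix of $H_{\V_i\to\V_{i+1}}$ obtained by selecting those rows/columns, corrupted by the layer-$(i+1)$ noise and then by the extra quantization noise $\hat Z$ of variance $(D-1)\sigma^2$. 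So the first step is to justify the chain-rule / layered decomposition
\begin{equation*}
I(X_\Omega;\hat Y_{\Omega^c}\mid X_{\Omega^c},H)\;=\;\sum_{i=0}^{D-1} I\bigl(X_{\Omega_i};\hat Y_{\Omega_{i+1}^c}\mid X_{\Omega_{i+1}^c},\, \text{(earlier layers)},\,H\bigr),
\end{equation*}
using independence of inputs across nodes and the Markov structure of the layered network, and to argue each term equals the mutual information across a single-hop MIMO channel with additive Gaussian noise of total variance $D\sigma^2$.

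The second step is to lower-bound each single-hop term by the capacity of the corresponding MIMO channel at noise level $\sigma^2$ minus $\log D$. For a fixed realization of the relevant fading submatrix $H'$ of size $|\Omega_{i+1}^c|\times|\Omega_i|$, the term is $\log\det\bigl(I+\tfrac{P}{D\sigma^2}H'H'^\dagger\bigr)\ge \log\det\bigl(I+\tfrac{P}{\sigma^2}H'H'^\dagger\bigr)-\log\det(D\cdot I)^{?}$ — here I must be a bit careful: the clean inequality is $\log\det(I+\tfrac1D M)\ge \log\det(I+M)-\operatorname{rank}(M)\log D$, and since the rank is at most $K$ (the layers have $K$ nodes), each hop loses at most $K\log D$. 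Summing naively over $D$ hops would give $KD\log D$, which is too weak, so the crucial point is that the telescoping/decomposition must be arranged so the $\log D$ loss is incurred only once overall, not once per layer.

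This is exactly where I expect the main obstacle, and the resolution is Claim~\ref{claim1} (referenced in the Remark): one shows that the sum of the single-hop mutual informations across the cut, evaluated at noise level $\sigma^2$, telescopes to a quantity equal to $C(K,K)$ (the value of the $\V_0$ cut), i.e.\ every cut in this layered i.i.d.\ network has the same value $C(K,K)$ at the true noise level. Given that, the degradation from replacing $\sigma^2$ by $D\sigma^2$ enters only through the \emph{last} effective hop into $\Omega^c$ reaching $d$ — or more precisely, the $\log D$ penalty can be absorbed into a single $K\log D$ term rather than accumulating — yielding $I(X_\Omega;\hat Y_{\Omega^c}\mid X_{\Omega^c},H)\ge C(K,K)-K\log D$ uniformly in $\Omega$. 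I would therefore: (i) set up the layered decomposition and identify each term with a MIMO single-hop mutual information at effective noise $D\sigma^2$; (ii) invoke Claim~\ref{claim1}'s telescoping identity to see that at noise $\sigma^2$ the cut value is exactly $C(K,K)$; (iii) control the gap between the $D\sigma^2$ and $\sigma^2$ versions by the rank-$K$ determinant inequality, bounding the total loss by $K\log D$; and (iv) take the minimum over cuts. The delicate bookkeeping in steps (ii)–(iii) — ensuring the per-hop $\log D$ losses do not accumulate — is the heart of the argument.
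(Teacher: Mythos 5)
You have assembled the right ingredients --- the block-diagonal per-layer decomposition of the cut value, the Claim~\ref{claim1} telescoping, and the rank-$K$ determinant inequality --- but the step that is supposed to keep the $\log D$ losses from accumulating is precisely the step you leave unproven, and the justification you sketch for it does not hold. It is not true that ``every cut \dots has the same value $C(K,K)$ at the true noise level'': Claim~\ref{claim1} only says the \emph{minimum} over cuts equals $C(K,K)$ (attained by the cuts $\V^0,\dots,\V^{D-1}$); a generic cut's value is a sum of per-hop MIMO terms that can strictly exceed $C(K,K)$. Likewise the assertion that the degradation from $\sigma^2$ to $D\sigma^2$ ``enters only through the last effective hop'' is unsupported: every hop crossing the cut sees the inflated noise, and comparing the $D\sigma^2$ and $\sigma^2$ versions hop by hop via $\log\det(I+\tfrac1D M)\ge\log\det(I+M)-\operatorname{rank}(M)\log D$ costs up to $\min(M_{i+1}^c,M_i)\log D$ per hop, i.e.\ order $KD\log D$ in total for cuts that split every layer --- the very obstruction you acknowledge but never actually remove.

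The missing idea (and the paper's actual route) is to run the telescoping argument \emph{at the degraded noise level}, not at $\sigma^2$. Let $C_D(x,y)$ denote the ergodic Rayleigh MIMO capacity with noise variance $D\sigma^2$; the properties (a)--(c) used in the proof of Claim~\ref{claim1} are generic MIMO-capacity facts that hold for any noise level, so the identical argument yields $\min_{\Omega:s\in\Omega,d\in\Omega^c} I(X_\Omega;\hat Y_{\Omega^c}\mid X_{\Omega^c},H)=C_D(K,K)=\EE\bigl[\log\det\bigl(I+\tfrac{P}{D\sigma^2}HH^\dagger\bigr)\bigr]$, the minimum again being attained by the single-hop cuts $\V^i$. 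Only after this reduction does one compare noise levels, and then only once, on a single $K\times K$ channel: $I+\tfrac{P}{D\sigma^2}HH^\dagger\succeq\tfrac1D\bigl(I+\tfrac{P}{\sigma^2}HH^\dagger\bigr)$ gives $C_D(K,K)\ge C(K,K)-K\log D$. In other words, the $K\log D$ penalty is not ``absorbed'' across an arbitrary multi-layer cut; it simply never accumulates because the minimizing cuts are single-hop $K$-by-$K$ cuts of the degraded network. Your outline as written (telescope at $\sigma^2$, then repair with per-hop determinant bounds) cannot be completed into a proof without this reordering.
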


\begin{proof}
We first prove the following relation:

\begin{claim}\label{claim1}
$$\min_{\Omega :s\in\Omega, d\in\Omega^c} I(X_{\Omega};Y_{\Omega^c}|X_{\Omega^c},H) = C(K,K).$$
\end{claim}

\begin{figure}[t]
\centering
\includegraphics[scale=1]{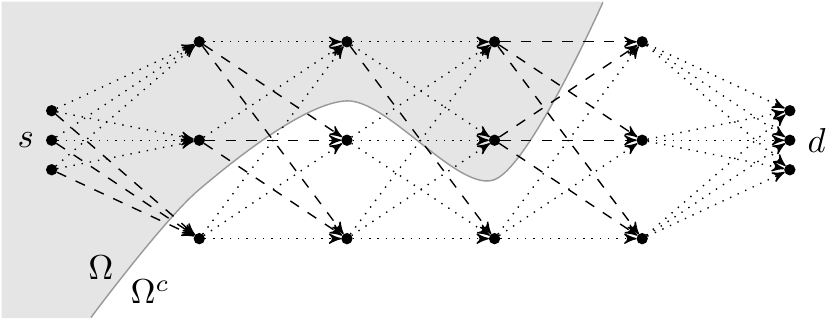}\squeezeup
\caption{Links crossing the cut $\Omega$ denoted by dashed lines; $M_1=2,M_2=1,M_3=2,M_4=0$}\label{fig:cut}
\end{figure}

For notational convenience in this proof, we define $C_{i.i.d.}(\Omega):=I(X_{\Omega};Y_{\Omega^c}|X_{\Omega^c},H)$, where we emphasize that the inputs are i.i.d. $\C\N(0,P)$ via the subscript ``\emph{i.i.d.}''.

Consider a cut $\Omega$ that contains $M_1$ nodes from $\V_1$, $M_2$ from $\V_2$ and so on until $M_{D-1}$ from $\V_{D-1}$ (see Figure~\ref{fig:cut}). Recall that we assume $s\in\Omega$ and $d\in\Omega^c$. Then $C_{i.i.d.}(\Omega)$ is given by $$\EE\left[\log\det\left(I+\frac{P}{\sigma^2}H_{\Omega\rightarrow\Omega^c}H_{\Omega\rightarrow\Omega^c}^{\dagger}\right)\right],$$
where $H_{\Omega\rightarrow\Omega^c}$ is a block diagonal matrix containing blocks of size $M_1^c$-by-$K$, $M_2^c$-by-$M_1$, $M_3^c$-by-$M_2$, $\dots$, $M_{D-1}^c$-by-$M_{D-2}$ and finally $K$-by-$M_{D-1}$. We have abused notation here by defining $M_i^c:=|\V_i|-M_i=K-M_i.$

Since the capacity of a MIMO channel that has block diagonal structure is the sum of the capacities of the individual MIMO blocks, we have
\begin{IEEEeqnarray*}{lCl} 
C_{i.i.d.}(\Omega) & =&  \EE\left[\log\det\left(I+\frac{P}{\sigma^2}H_{\Omega\rightarrow\Omega^c}H_{\Omega\rightarrow\Omega^c}^{\dagger}\right)\right]\\
& = & C(M_1^c,K)+C(M_2^c,M_1)+C(M_3^c,M_2)+\\
&&\quad\dots +C(M_{D-1}^c,M_{D-2})+C(K,M_{D-1}) \quad (*)
\end{IEEEeqnarray*}
We show below that $(*)\geq C(K,K)$. Note the following properties of the function $C(x,y)$:
\begin{itemize}
\item[a)] $C(x,y)=C(y,x)$,
\item[b)] $C(z,y)\geq C(x,y)$ if $z\geq x$,
\item[c)] $C(x,y)+C(K-x,y)\geq C(K,y)$ which can be shown via an application of Hadamard's inequality.
\end{itemize}
Proving that the expression in $(*)\geq C(K,K)$ is just a matter of applying these properties multiple times. For concreteness, we show this for the case $D=4$ below, which can be generalized in a straightforward way to higher values of $D$.
\begin{IEEEeqnarray*}{rCl}
(*) & = & C(M_1^c,K)+C(M_2^c,M_1)+C(M_3^c,M_2)+C(K,M_3)\\
& \geq & C(M_1^c,K)+C(M_2^c,M_1)+C(M_3^c,M_2)+C(M_2,M_3)\\
& \geq & C(M_1^c,K)+C(M_2^c,M_1)+C(K,M_2)\\
& \geq & C(M_1^c,K)+C(M_2^c,M_1)+C(M_1,M_2)\\
& \geq & C(M_1^c,K)+C(K,M_1)\\
& \geq & C(K,K),
\end{IEEEeqnarray*} 
where the first inequality follows by applying property (b) to the last term in the first line, the second inequality follows by applying (c) to the last two terms in the earlier line etc. So we have shown that \begin{equation}\label{eq:lem1}\min_{\Omega:s\in\Omega,d\in\Omega^c}C_{i.i.d.}(\Omega)\geq C(K,K).\end{equation} The cuts $\V^0,\V^1,\dots,\V^{D-1}$ satisfy \eqref{eq:lem1} with equality, so we are done. (Each of these cuts induces a $K$-by-$K$ MIMO channel across the cut, i.e. $C_{i.i.d.}(\V^i)=C(K,K)$ for any $0\leq i\leq D-1$.) This proves Claim~\ref{claim1}.

Due to our choice of the quantization: $\hat{Y}=Y+\hat{Z}$ where $\hat{Z}\sim\C\N(0,D-1)$, evaluating the term $I(X_{\Omega};\hat{Y}_{\Omega^c}|X_{\Omega^c},H)$ is equivalent to evaluating $I(X_{\Omega};Y_{\Omega^c}|X_{\Omega^c},H)$ except that now the noise is $Z+\hat{Z}$ instead of just $Z$, i.e. the noise power is $D\sigma^2$ instead of $\sigma^2$. %
Hence, %
\begin{IEEEeqnarray}{lCl}
\min_{\Omega:s\in\Omega ,d\in\Omega^c} I(X_{\Omega};\hat{Y}_{\Omega^c}|X_{\Omega^c},H)\nonumber\\
\quad\quad =\EE\left[\log\det\left(I+\frac{P}{D\sigma^2}H_{\V_i\rightarrow\V_{i+1}}H_{\V_i\rightarrow\V_{i+1}}^{\dagger}\right)\right]\nonumber\\
\quad\quad \geq C(K,K) - K\log(D)\label{eq:nnc_lem2},
\end{IEEEeqnarray}

This concludes the proof of the lemma.
\end{proof}

\vspace{3mm}

\subsection{Proof of Theorem~\ref{thm:main}}

Via Lemma~\ref{lem:iid_mincut} and Lemma~\ref{lem:nnc}, we have proved Theorem~\ref{thm:main} for the case of a single $K$-antenna source. We now show that the same result holds for the sum-capacity in the original setup containing $K$ single-antenna sources.

It is clear that the upper bound established in Section~\ref{subsec:upper} is an upper bound on the achievable sum-rate for the $K$ sources. 

For the lower bound, we observe that since in the above discussion  we have chosen i.i.d. input distributions for the antennas at the source, we can apply the same strategy and therefore achieve the same total rate even if antennas are not collocated. A more formal argument can be made as follows. %
Consider the setup with $K$ sources as shown in Figure~\ref{subfig:layered}. We fix the operation of the relays to be the same as that described in Section~\ref{subsec:nnc}. This induces a multiple access channel between the sources $s_1,s_2,\dots,s_K$ and the destination $d$ described by a certain pdf $p\left(y_d,H|x_{\V_0}\right)$.  It is well known that the achievable rate region for a memoryless MAC channel is a polymatroid and the largest achievable sum-rate is given by $I(X_{\V_0};Y_{d},H)$  where $p(x_{\V_0})=\prod_{j=1}^K p(x_{s_j})$ since the transmitting nodes can cannot cooperate. If we fix $p(x_{s_j})$ to be the $\C\N(0,P)$ pdf for all $1\leq j\leq K$, then $I(X_{\V_0};Y_{d},H)$ is the same end-to-end mutual information that we obtain in the case of a single source with $K$ antennas using the achievability scheme in Section~\ref{subsec:nnc}. Thus, the lower bound on the capacity for a single source with $K$ antennas that we proved in Lemma~\ref{lem:iid_mincut} also applies to the sum-capacity in the case of $K$ single antenna sources. This completes the proof of Theorem~\ref{thm:main}.
\hfill $\QED$

\paragraph*{Remark} We point out that Theorem~\ref{thm:main} continues to hold if there are multiple destination nodes in the final layer, each having $K$ antennas and interested in all the messages.

\section{Concluding Remarks}
In this paper, we have considered a time-varying Gaussian relay network in which $K$ sources communicate to a destination over multiple layers of relays, each layer containing $K$ nodes. We have shown that by better choosing the quantization level in the compress-and-forward strategies, we can improve the gap to capacity from linear to logarithmic in the depth of the network. This is obtained by decreasing the resolution of quantization as the number of nodes in the network increases, which decreases the associated rate penalty to communicate the quantization codewords to the destination.

\bibliographystyle{IEEEtran}
\bibliography{IEEEabrv,itw_gap}

\begin{thebibliography}{1}
\providecommand{\url}[1]{#1}
\csname url@rmstyle\endcsname
\providecommand{\newblock}{\relax}
\providecommand{\bibinfo}[2]{#2}
\providecommand\BIBentrySTDinterwordspacing{\spaceskip=0pt\relax}
\providecommand\BIBentryALTinterwordstretchfactor{4}
\providecommand\BIBentryALTinterwordspacing{\spaceskip=\fontdimen2\font plus
\BIBentryALTinterwordstretchfactor\fontdimen3\font minus
  \fontdimen4\font\relax}
\providecommand\BIBforeignlanguage[2]{{%
\expandafter\ifx\csname l@#1\endcsname\relax
\typeout{** WARNING: IEEEtran.bst: No hyphenation pattern has been}%
\typeout{** loaded for the language `#1'. Using the pattern for}%
\typeout{** the default language instead.}%
\else
\language=\csname l@#1\endcsname
\fi
#2}}

\bibitem{ADT11}
A.~Avestimehr, S.~Diggavi, and D.~Tse, ``Wireless network information flow: A
  deterministic approach,'' \emph{Information Theory, IEEE Transactions on},
  vol.~57, no.~4, pp. 1872--1905, 2011.

\bibitem{LKEC11}
S.~Lim, Y.-H. Kim, A.~El~Gamal, and S.-Y. Chung, ``Noisy network coding,''
  \emph{Information Theory, IEEE Transactions on}, vol.~57, no.~5, pp.
  3132--3152, 2011.

\bibitem{OD10}
A.~Ozgur and S.~Diggavi, ``Approximately achieving gaussian relay network
  capacity with lattice codes,'' in \emph{Information Theory Proceedings
  (ISIT), 2010 IEEE International Symposium on}, 2010, pp. 669--673.

\bibitem{CO12}
B.~Chern and A.~{\"O}zg{\"u}r, ``Achieving the capacity of the n-relay gaussian
  diamond network within log n bits,'' \emph{CoRR}, vol. abs/1207.5660, 2012.

\bibitem{NNW11}
U.~Niesen, B.~Nazer, and P.~Whiting, ``Computation alignment: Capacity
  approximation without noise accumulation,'' \emph{CoRR}, vol. abs/1108.6312,
  2011.

\bibitem{NG11}
B.~Nazer and M.~Gastpar, ``Compute-and-forward: Harnessing interference through
  structured codes,'' \emph{Information Theory, IEEE Transactions on}, vol.~57,
  no.~10, pp. 6463--6486, 2011.

\bibitem{NGJV09}
B.~Nazer, M.~Gastpar, S.~Jafar, and S.~Vishwanath, ``Ergodic interference
  alignment,'' in \emph{Information Theory, 2009. ISIT 2009. IEEE International
  Symposium on}, 2009, pp. 1769--1773.

\bibitem{CT06}
T.~M. Cover and J.~A. Thomas, \emph{Elements of Information Theory (Wiley
  Series in Telecommunications and Signal Processing)}.\hskip 1em plus 0.5em
  minus 0.4em\relax Wiley-Interscience, 2006.

\bibitem{T99}
I.~E. Telatar, ``Capacity of multi-antenna gaussian channels,'' \emph{European
  Transactions on Telecommunications}, vol.~10, pp. 585--595, 1999.

\end{thebibliography}
\end{document}